\newcommand{\midarrow}{\tikz \draw[-Stealth] (0,0) -- +(.1,0);}
\begin{document}
\title{Decomposition of Clifford Gates}

\author{
  \IEEEauthorblockN{Tefjol Pllaha, Kalle Volanto, Olav Tirkkonen}
	
 \IEEEauthorblockA{\small Aalto University, Helsinki, Finland\\ e-mails: \{tefjol.pllaha, kalle.volanto, olav.tirkkonen\}@aalto.fi
    }
}

\maketitle

\begin{abstract}
In fault-tolerant quantum computation and quantum error-correction one is interested on Pauli matrices that commute with a circuit/unitary. 
We provide a fast algorithm that decomposes any Clifford gate as a~\emph{minimal} product of Clifford transvections. The algorithm can be directly used for finding all Pauli matrices that commute with any given Clifford gate. 
To achieve this goal, we exploit the structure of the symplectic group with a novel graphical approach.
\end{abstract}

\section{Introduction}
The Clifford group is of central importance in quantum information and computation.
This paper is primarily motivated by its importance in fault-tolerant quantum computation and quantum error-correction~\cite{Gottesman-phd97,CRSS98}. 
Traditionally, the Clifford group is studied via its connection with the binary symplectic group~\cite{CRSS97} and the associated decompositions of the latter. 
The Bruhat decomposition of the symplectic group~\cite{MR18} gives a standard generating set made of qubit permutation~\eqref{e-GDP}, diagonal gates~\eqref{e-GUS}, and partial Hadamard gates~\eqref{e-GO}. Alternatively, the Clifford group can be studied via the transvection decomposition of the symplectic group, which we briefly describe in Section~\ref{Sec-decomp}. 
It is well-known~\cite{OMeara,Callan} that the symplectic group is generated by symplectic transvections~\eqref{e-transvectionDef}. 
Although these references give a constructive proof, the decomposition primarily relies on exhaustive search. 
In this paper we give a simple and fast algorithm that decomposes any symplectic matrix as a~\emph{minimal} product of symplectic transvections. 
On the other hand, the Clifford gates~\eqref{e-transvection} correspond to symplectic transvections, and for this reason we will refer to them as Clifford transvections.
By definition, Clifford transvections are sparse (in fact, they are the most sparse Cliffords other than Paulis and diagonal Cliffords), and given their simple conjugation action, they are also easy to implement.
This yields directly a decomposition of any $m$-qubit Clifford gate as a minimal product of Clifford transvections.

We exploit the structure of symplectic matrices with a novel graphical approach. We associate to a symplectic matrix, written as a minimal product of transvections, a (binary, symmetric) Gram-type matrix~\eqref{e-Gram} that captures the commutativity relations of the defining transvections. 
Viewed as an adjacency matrix, it yields a graph whose directed paths completely determine the given symplectic matrix; see Theorem~\ref{T-T}. These directed paths can be counted with an invertible upper-triangular matrix~\eqref{e-B}, and this allows us to reduce the decomposition problem to a matrix triangulation problem over the binary field. For the latter we make use of the results of~\cite{Botha97}.

In~\cite{PRTC20}, the authors studied the Clifford group via the support~\eqref{e-supp1} of a unitary matrix. 
In that language, Clifford transvections are precisely those Cliffords that have a support of size two, which is smallest support among non-Pauli Cliffords.
On top of being a useful algebraic tool, the support of a unitary encodes valuable information about the Paulis that commute with the given unitary.
In~\cite[Prop.~9]{PRTC20}, the authors compute the support of standard Clifford gates~\eqref{e-GDP}-\eqref{e-GO}. The results of this paper provide a fast algorithm for computing the support of~\emph{any} Clifford gate.
Heuristically, we expect our results to have applications in designing flag gadgets~\cite{Chao-arxiv17a,Chao-arxiv19} for stabilizer circuits.


\section{Preliminaries}

\subsection{The binary symplectic group}
The binary symplectic group, denoted $\Sp(2m;2)$, consists of $2m\times 2m$ matrices over the binary field $\F_2$ that preserve 
the 
matrix
\begin{equation}
    \mathbf{\Omega} = \fourmat{\textbf{0}_m}{\bI_m}{\bI_m}{\textbf{0}_m},
\end{equation}
under congruence. 
That is, $\bF\in \Sp(2m;2)$ iff $\bF\mathbf{\Omega}\bF\T = \mathbf{\Omega}$. Equivalently, symplectic matrices are precisely those matrices that preserve the \emph{symplectic inner product} over $\Fm$
\begin{equation}\label{e-sip}    
\inners{(\ba,\bb)}{(\bc,\bd)} = \ba\bd\T +\bb\bc\T = (\ba,\bb)\mathbf{\Omega}(\bc,\bd)\T.
\end{equation}
We will denote by $\GL(n;2)$ and $\Sym(n;2)$ the groups of $n\times n$ invertible and symmetric matrices over the binary field $\F_2$, respectively. 
A matrix $\bF=\fourmat{\bA}{\bB}{\bC}{\bD}\in\Sp(2m;2)$ satisfies $\bF\mathbf{\Omega}\bF\T = \mathbf{\Omega}$, which in turn is equivalent with $\bA\bB\T, \bC\bD\T \in \Sym(m;2)$ and $\bA\bD\T + \bB\bC\T = \bI_m$.
In $\Sp(2m;2)$ we distinguish two subgroups:
\begin{align}
    \cF_D:= & \left\{\bF_D(\bP) = \fourmat{\bP}{\textbf{0}_m}{\textbf{0}_m}{\bP^{- \sf T}}\,\,\middle| \,\,\bP\in \GL(m;2)\right\}, \label{e-FD}\\
    \cF_U:= & \left\{\bF_U(\bS) = \fourmat{\bI_m}{\bS}{\textbf{0}_m}{\bI_m}\,\,\middle| \,\,\bS\in \Sym(m;2)\right\}.\label{e-FU}
\end{align}
Above, $(\sbt)^{- \sf T}$ denotes the inverse transposed, and directly by definition we have $\cF_D \cong \GL(m;2) $ and $\cF_U \cong \Sym(m;2)$. 
Together with matrices
\begin{equation}\label{e-Or}
    \bF_{\mathbf{\Omega}}(r) = \fourmat{\Imrr}{\Imr}{\Imr}{\Imrr},
\end{equation}
with $\Imr$ being the block matrix with $\bI_r$ in upper left corner and 0 elsewhere, and $\Imrr = \bI_m - \Imr$, 
these two groups are the building blocks of the \emph{Bruhat decomposition} with many applications in quantum computation~\cite{MR18,PTC20}.
A symplectic matrix $\bF\in \Sp(2m;2)$ is said to be an \emph{involution} if $\bF^2 = \bI_{2m}$ and is said to be \emph{hyperbolic} if $\inners{\bv}{\bv\bF} = 0$ for all $\bv\in \Fm$. It is straightforward to verify that a hyperbolic map is also an involution.
We will denote
\begin{align}
    \Fix(\bF) & := \ker(\bI +  \bF) := \{\bv\in \Fm\mid \bv = \bv\bF\},\label{e-fix}\\
    \Res(\bF) & := \rs(\bI +  \bF) := \{\bv +  \bv\bF \mid \bv\in \Fm\},\label{e-res}
\end{align}
where $\ker(\sbt)$ and $\rs(\sbt)$ denote the null space and the row space of a matrix, respectively. By definition, these spaces satisfy
\begin{equation}
    \dim \Res(\bF) + \dim \Fix(\bF) = 2m.
\end{equation}
Involutions have the nice property that $\Res(\bF)\subseteq \Fix(\bF)$.
Additionally, for an involution we have $\inners{\bx}{\by\bF} = \inners{\bx\bF}{\by}$ and thus $\inners{\bx+\bx\bF}{\by+\by\bF}= 0$ for all $\bx,\by\in \Fm$. 
This means that $\Res(\bF)$ is \emph{self-orthogonal} (or \emph{self-dual} if $\dim\Res(\bF) = m$) with respect to~\eqref{e-sip}.

\subsection{The Heisenberg-Weyl group}
The \emph{bit-flip} and the \emph{phase-flip} gates are given by 
\begin{equation}
    \bX := \fourmat{0}{1}{1}{0} \text{ and } \bZ:=\fourmat{1}{0}{0}{-1},
\end{equation}
respectively. 
For vectors $\ba, \mathbf{b} \in \F_2^m$ we will denote 
\begin{equation}
    \bD(\ba,\mathbf{b}):= \bX^{a_1}\bZ^{b_1}\otimes\cdots\otimes\bX^{a_m}\bZ^{b_m}.
\end{equation}
The \emph{Heisenberg-Weyl} group is defined as
\begin{equation}
    \cH\cW_N := \{i^k\bD(\ba,\bb)\mid \ba,\bb\in \F_2^m, k \in \Z_4\}\subset \U(N),
\end{equation}
where $N = 2^m$. 
We will denote by $\cP\cH\cW_N:=\cH\cW_N/\{\pm\bI_N,\pm i\bI_N\}$ the ~\emph{projective} Heisenberg-Weyl group.
Hermitian elements of $\cH\cW_N$ are given (and denoted) by $\bE(\ba,\bb):= i^{\ba\bb^{\sf T}}\bD(\ba,\bb)$.

\subsection{The Clifford group}

The \emph{Clifford group} $\cl_N$ is defined to be the normalizer of $\cH\cW_N$ in $\U(N)$, that is, 
\begin{equation}
\label{e-cliff1}
    \cl_N := \{\bG\in\U(N)\mid \bG \cH\cW_N\bG^\dagger \subset \cH\cW_N\}.
\end{equation}
In order to obtain a finite group,
~\eqref{e-cliff1} is meant modulo $\U(1)$.

Let $\{\be_1,\ldots,\be_{2m}\}$ be the standard basis of $\Fm$, and consider $\bG\in \cl_N$. Let $\bc_i\in \Fm$ be such that 
\begin{equation}\label{e-GEG}
    \bG\bE(\be_i)\bG^\dagger = \pm\bE(\bc_i).
\end{equation}
Then the matrix $\bF_{\bG}$ whose $i$th row is $\bc_i$ is a symplectic matrix such that
\begin{equation}\label{e-cliff}
    \bG\bE(\bc)\bG^\dagger = \pm\bE(\bc\bF_{\bG})
\end{equation}
for all $\bc\in \Fm$.
We thus have a group homomorphism 
\begin{equation}\label{e-Phi}
    \Phi : \cl_N\longrightarrow \Sp(2m;2),\, \bG\longmapsto \bF_{\bG}.
\end{equation}
In addition, $\Phi$ is surjective with kernel $\ker \Phi = \cP\HW$~\cite{RCKP18}, and thus $\cl_N/\cP\cH\cW_N \cong \Sp(2m;2)$. It follows that $\cl_N$ is generated by preimages of symplectic matrices \eqref{e-FD},\eqref{e-FU},\eqref{e-Or}. Here a preimage $\Phi^{-1}(\bF)$ is meant up to $\cH\cW_N$.
These preimages are, respectively, 
\begin{align}
    \bG_D(\bP) & := |\bv\r\longmapsto |\bv\bP\r,\label{e-GDP}\\
    \bG_U(\bS) & := \diag\left(i^{{\bv\bS\bv^{\sf T}} \mod 4}\right)_{\bv\in\F_2^m}, \label{e-GUS}\\
    \bG_{\mathbf{\Omega}}(r) & :=(\bH_2)^{\otimes r}\otimes \bI_{2^{m-r}}\label{e-GO},
\end{align}
where $\bH_2$ is the Hadamard gate.

Since $\Phi$ is a homomorphism we have that $\Phi(\bG^\dagger) = \bF_\bG^{-1}$. It follows that if $\bG\in\cl_N$ is Hermitian then $\bF_\bG$ is a symplectic involution. Conversely, if $\bF$ is a symplectic involution then $\bG = \Phi^{-1}(\bF)$ satisfies $\bG^2 \in \cH\cW_N$. 
As mentioned, a special class of involutions are the hyperbolic maps. If $\bG\in\cl_N$ corresponds to a hyperbolic $\bF \in \Sp(2m;2)$ then~\eqref{e-cliff} implies that $\bG\bE\bG^\dagger$ commutes with $\bE$ for all $\bE$.

\section{Transvection Decomposition of Symplectic Matrices}\label{Sec-decomp}

A \emph{symplectic transvection} is a symplectic map with one-dimensional residue space. It is easily seen that if $\Res(\bF) = \l\bv\r$ then the matrix $\bF\in \Sp(2m;2)$ must act as 
\begin{equation}\label{e-transvectionDef}
    \bT_\bv := \bI + \mathbf{\Omega}\bv\T\bv, \, \bx \longmapsto \bx + \inners{\bx}{\bv}\bv.
\end{equation}
We will call two transvections $\bT_\bv,\bT_\bw$ \emph{independent} if the defining $\bv,\bw$ are independent. Otherwise, we will call the transvections dependent. Note also that $\bT_\bv,\bT_\bw$ \emph{commute}, that is,  $\bT_\bv\cdot\bT_\bw = \bT_\bw\cdot\bT_\bv$ iff $\inners{\bv}{\bw}=0$, that is, iff $\bv,\bw$ are orthogonal (with respect to~\eqref{e-sip} of course).

It is well-known that $\Sp(2m;2)$ is generated by transvections. It is shown in~\cite{OMeara,Callan} that a non-hyperbolic map $\bF$ can be written as a product of $r$ \emph{independent} transvections $\bT_{\bv_1},\ldots,\bT_{\bv_r}$, where $r = r(\bF):= \dim\Res(\bF) = 2m - \dim\Fix(\bF)$ and $\Res(\bF) = \l\bv_1,\ldots,\bv_r\r$. The strategy of~\cite{OMeara} is to find $\bv$ such that $\inners{\bx}{\bx\bF} = 1$ (which exists for non-hyperbolics), and consider $\bF\bT_\bv, \bv = \bx+\bx\bF \in \Res(\bF)$, for which $r(\bF\bT_\bv) = r(\bF) - 1$. One then repeats the process accordingly until a one-dimensional residue space is reached.

The following result will enable us to restrict without loss of generality to non-hyperbolic maps.
\begin{lem}[\mbox{\cite[2.1.8]{OMeara}}]\label{l-hyp}
Let $\bF\in \Sp(2m;2)$ be hyperbolic. Then there exists $\bv\in \Fm$ such that $\bF\bT_\bv$ is non-hyperbolic and $\Res(\bF) = \Res(\bF\bT_\bv)$. 
\end{lem}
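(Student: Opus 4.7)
The plan is to show that any nonzero $\bv\in\Res(\bF)$ already does the job, tacitly assuming $\bF\neq\bI_{2m}$ (in which case $\Res(\bF)=0$ and no transvection can preserve this while being non-hyperbolic). The structural facts I rely on are: since hyperbolic maps are involutions, $\Res(\bF)\subseteq\Fix(\bF)$ and hence $\bv\bF=\bv$; the involution identity $\inners{\bx\bF}{\by}=\inners{\bx}{\by\bF}$ already recorded in the preliminaries; and the fact that the symplectic form is alternating, $\inners{\bx}{\bx}=0$ for every $\bx$.

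For the non-hyperbolic claim, expanding $\by(\bF\bT_\bv)=\by\bF+\inners{\by\bF}{\bv}\bv$ yields
\[
    \inners{\by}{\by\bF\bT_\bv} \;=\; \inners{\by}{\by\bF} \;+\; \inners{\by\bF}{\bv}\,\inners{\by}{\bv}.
\]
The first term vanishes by hyperbolicity of $\bF$, while the involution identity together with $\bv\bF=\bv$ gives $\inners{\by\bF}{\bv}=\inners{\by}{\bv}$. Hence $\inners{\by}{\by\bF\bT_\bv}=\inners{\by}{\bv}$, which is nonzero for some $\by$ by non-degeneracy of the symplectic form and $\bv\neq 0$.

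For the residue equality, the inclusion $\Res(\bF\bT_\bv)\subseteq\Res(\bF)$ is immediate from
\[
    \bx+\bx(\bF\bT_\bv) \;=\; (\bx+\bx\bF)+\inners{\bx\bF}{\bv}\bv,
\]
since both summands lie in $\Res(\bF)$. The reverse inclusion is the subtle step. The key is first to establish $\bv\in\Res(\bF\bT_\bv)$: choose $\bx_0$ with $\bx_0+\bx_0\bF=\bv$ and observe
\[
    \inners{\bx_0}{\bv} \;=\; \inners{\bx_0}{\bx_0}+\inners{\bx_0}{\bx_0\bF} \;=\; 0,
\]
using the alternating form and hyperbolicity. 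This forces $\bx_0+\bx_0(\bF\bT_\bv)=\bv$, placing $\bv\in\Res(\bF\bT_\bv)$. For a general $\bu=\by+\by\bF\in\Res(\bF)$, the same expansion gives $\by+\by(\bF\bT_\bv)=\bu+\inners{\by}{\bv}\bv\in\Res(\bF\bT_\bv)$; adding back the possible copy of $\bv$ (which lies in $\Res(\bF\bT_\bv)$) yields $\bu\in\Res(\bF\bT_\bv)$.

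I expect the main obstacle to be precisely this reverse inclusion. A generic rank-one perturbation argument only shows that $\dim\Res(\bF\bT_\bv)$ and $\dim\Res(\bF)$ differ by at most one, and ruling out a drop is exactly what the hyperbolic/alternating cancellation $\inners{\bx_0}{\bv}=0$ above accomplishes.
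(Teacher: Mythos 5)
Your proof is correct and follows the same overall plan as the paper — pick a nonzero $\bv\in\Res(\bF)$, show $\bF\bT_\bv$ is non-hyperbolic by finding a suitable witness $\by$, and establish $\Res(\bF\bT_\bv)=\Res(\bF)$ — but you fill in two steps that the paper leaves implicit, and both improvements are genuine. First, the paper asks for $\by$ satisfying the two conditions $\inners{\by}{\bv}=1=\inners{\by\bF}{\bv}$, whereas you observe that the involution identity plus $\bv\bF=\bv$ force $\inners{\by\bF}{\bv}=\inners{\by}{\bv}$, so non-degeneracy alone suffices. Second, and more substantively, the paper dispatches the reverse inclusion $\Res(\bF)\subseteq\Res(\bF\bT_\bv)$ with the bare assertion ``equality is due to equal cardinalities''; that assertion is not automatic, since a rank-one perturbation $\bF\mapsto\bF\bT_\bv$ can in general drop $\dim\Res$ by one. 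The cancellation $\inners{\bx_0}{\bv}=0$ you derive from the alternating form and hyperbolicity is exactly what rules out the drop, and your direct verification that $\bv\in\Res(\bF\bT_\bv)$ (so that $\Res(\bF)=\Res((\bF\bT_\bv)\bT_\bv)\subseteq\Res(\bF\bT_\bv)$) is a cleaner and more self-contained way to close the argument than the dimension count the paper gestures at. The only caveat, which you already flag, is the tacit assumption $\bF\neq\bI_{2m}$, an edge case the paper also silently excludes.
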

\begin{proof}
Fix any $\mathbf{0} \neq \bv = \bx + \bx\bF \in \Res(\bF)$.
Then any $\by$ such that $\inners{\by}{\bv} = 1 = \inners{\by\bF}{\bv}$ (which of course exists) satisfies $\inners{\by}{\by\bF\bT_\bv} = 1$, and thus $\bF\bT_\bv$ is non-hyperbolic. Next, by the choice of $\bv$, $\Res(\bF\bT_\bv) \subseteq \Res(\bF)$ holds trivially, and equality is due to equal cardinalities.
\end{proof}
It follows from Lemma~\ref{l-hyp} that a hyperbolic map $\bF$ is a product of $r+ 1$ transvections, $r$ of which form a basis for $\Res(\bF)$, and the additional transvection is dependent of the first $r$.

For involutions (hyperbolic or not) we have the following nicer result.

\begin{prop}\label{P-ComT} Any involution is a product of commuting transvections. The converse is also true, that is, any product of commuting transvections yields and involution.
\end{prop}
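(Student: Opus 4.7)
The plan is to attack the two directions separately, with the converse being essentially a one-line computation.

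For the converse, if $\bF = \bT_{\bv_1}\cdots\bT_{\bv_k}$ with pairwise commuting factors, then squaring and using commutativity gives $\bF^2 = \bT_{\bv_1}^2\cdots\bT_{\bv_k}^2 = \bI$, since every transvection is itself an involution (directly from~\eqref{e-transvectionDef}).

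For the forward direction, my strategy is to combine the O'Meara-style decomposition sketched in the paragraph before Lemma~\ref{l-hyp} with the self-orthogonality of $\Res(\bF)$ for involutions. Assume first that $\bF$ is a \emph{non-hyperbolic} involution with $r=\dim\Res(\bF)$. The O'Meara iteration picks $\bv = \bx+\bx\bF \in \Res(\bF)$ with $\inners{\bx}{\bx\bF}=1$, forms $\bF\bT_\bv$, and repeats until $r$ has been reduced to $1$. This produces a decomposition $\bF = \bT_{\bv_1}\cdots\bT_{\bv_r}$ where $\{\bv_1,\ldots,\bv_r\}$ is a basis of $\Res(\bF)$. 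Since $\Res(\bF)$ is self-orthogonal with respect to~\eqref{e-sip} (as noted just after~\eqref{e-res}), we have $\inners{\bv_i}{\bv_j}=0$ for all $i,j$, and hence the transvections commute pairwise.

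The subtle point that I have to check carefully is that the iterated O'Meara map $\bF\bT_\bv$ stays an involution, so that the self-orthogonality argument can be reapplied at every stage. The key observation is that since $\Res(\bF)\subseteq\Fix(\bF)$ for an involution, the chosen $\bv$ is fixed by $\bF$; combined with the involution identity $\inners{\bx\bF}{\by}=\inners{\bx}{\by\bF}$ (also from the discussion after~\eqref{e-res}), a direct comparison of $\bx\bT_\bv\bF$ and $\bx\bF\bT_\bv$ shows that $\bT_\bv$ and $\bF$ commute, whence $(\bF\bT_\bv)^2 = \bF^2\bT_\bv^2 = \bI$. This is the main (and only) obstacle, and once it is in place the induction is routine.

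Finally, if $\bF$ is hyperbolic, Lemma~\ref{l-hyp} supplies a $\bv\in\Res(\bF)$ with $\bF\bT_\bv$ non-hyperbolic and $\Res(\bF\bT_\bv)=\Res(\bF)$. The commutativity observation above shows $\bF\bT_\bv$ is again an involution, so the non-hyperbolic case gives $\bF\bT_\bv = \bT_{\bv_1}\cdots\bT_{\bv_r}$ with $\{\bv_i\}$ a basis of $\Res(\bF)$. Multiplying on the right by $\bT_\bv$ yields $\bF = \bT_{\bv_1}\cdots\bT_{\bv_r}\bT_\bv$, and since all defining vectors lie in the self-orthogonal space $\Res(\bF)$, the whole product is pairwise commuting, completing the proof.
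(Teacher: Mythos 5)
Your proof is correct and uses the same two ingredients the paper leans on: the O'Meara decomposition of a non-hyperbolic map into transvections whose defining vectors span $\Res(\bF)$, and the self-orthogonality of $\Res(\bF)$ for an involution. One small point worth streamlining: the ``subtle point'' you verify --- that $\bF\bT_\bv$ stays an involution at each O'Meara step --- is not actually needed for the non-hyperbolic case. The paper's cited result already hands you, as a black box, a decomposition $\bF=\bT_{\bv_1}\cdots\bT_{\bv_r}$ with $\{\bv_i\}$ a basis of $\Res(\bF)$; the only facts you then invoke are that $\Res(\bF)$ is self-orthogonal and that transvections with orthogonal defining vectors commute, both applied once at the end rather than ``reapplied at every stage.'' (Your verification that $\bF$ and $\bT_\bv$ commute when $\bv\in\Res(\bF)\subseteq\Fix(\bF)$ is correct, and you do genuinely use it in the hyperbolic case to place yourself back in the non-hyperbolic-involution setting, so it is not wasted --- it just isn't the obstacle you flag it as in the first case.) The paper's own proof is a two-line version of exactly this: commutativity of transvections $\Leftrightarrow$ orthogonality of defining vectors, plus self-orthogonality of the residue space of an involution; your converse argument via $\bF^2=\prod\bT_{\bv_i}^2=\bI$ matches what ``follows immediately'' there.
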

\begin{proof}
The result follows immediately by the fact that two transvections commute iff their defining vectors are orthogonal, along with the fact that the residue space of an involution is self-orthogonal.
\end{proof}
\subsection{A Gram-type matrix}

In this section $\bF$ will be a generic symplectic matrix. We associate to a minimal transvection decomposition $\bF = \bT_{\bv_1}\cdots\bT_{\bv_r}$ a Gram-type matrix
\begin{equation}\label{e-Gram}
    \bA(\bv_1,\ldots,\bv_r) := [\inners{\bv_i}{\bv_j}]_{i,j} = \bV\mathbf{\Omega}\bV\T,
\end{equation}
where $\bV$ is the $r\times 2m$ matrix formed by stacking $\bv_1,\ldots,\bv_r$. Obviously, $\bA$ is symmetric and has zero diagonal. 
Since a minimal transvection decomposition is given by some basis of the residue space, we will assume that $\bv_i \in \Res(\bF)$.
Note that $\bA = \mathbf{0}$ iff $\bF$ is an involution iff $\bV$ is self-orthogonal. 
On the other hand, 
\begin{align}
    \inners{\bv_i}{\bv_j} & = \inners{\bx_i + \bx_i\bF}{\bx_j + \bx_j\bF} \\
    & = \inners{\bx_i\bF}{\bx_j} + \inners{\bx_i}{\bx_j\bF} \\
    & = \bx_i\bF\mathbf{\Omega}\bx_j^{\sf T}+ \bx_j\mathbf{\Omega}\bF\T\bx_j^{\sf T} \\
    & = \bx_i(\bF+\bF^{-1})\mathbf{\Omega}\bx_j^{\sf T} \\
    & = \inners{\bx_i(\bF+\bF^{-1})}{\bx_j}.
\end{align}
Obviously, $\bF$ is an involution iff $\bF = \bF^{-1}$, and thus $\bA$ also captures how far is $\bF$ from being an involution, or equivalently, how far is $\bV$ from being self-orthogonal.
In what follows we will denote $\bA_{\rm u}:= \texttt{triu}(\bA)$ the upper triangular part of $\bA$ and 
\begin{equation}\label{e-B}
    \bB(\bv_1,\ldots,\bv_r) := \sum_{\ell = 0}^{r-1} \bA_{\rm u}^\ell.
\end{equation}
By definition, it follows that $\bB$ is upper triangular with all-ones diagonal for any symplectic $\bF$, and is the identity matrix for any involution (since in this case $\bA = \mathbf{0})$. Moreover, $\bA_{\rm u}$ is $r\times r$ upper triangular with all-zero diagonal. This yields $\bA_{\rm u}^r = \mathbf{0}$, and thus 
\begin{equation}\label{e-B1}
    \bB = (\bI_r + \bA_{\rm u})^{-1} \text{ and } \bA_{\rm u} = \bI_r + \bB^{-1}.
\end{equation}
The matrices $\bA$ and $\bB$ have a natural graphical interpretation. Let us start with $\bA$, which can be thought as the adjacency matrix of the graph with vertices $\bv_i$ and edges $(\bv_i,\bv_j)$ iff $\inners{\bv_i}{\bv_j} = 1$. 
On the other hand, its upper triangular part $\bA_{\rm u}$ can be thought as the adjacency matrix of the corresponding \emph{directed} graph with edges $(\bv_i,\bv_j)$ iff $\inners{\bv_i}{\bv_j} = 1$ \emph{and} $i < j$. 
As for the matrix $\bB$, note first that entry $(i,j)$ of $\bA_{\rm u}^\ell$ counts directed paths from $\bv_i$ to $\bv_j$ of length $\ell$. Thus, entry $(i,j)$ (always for $i < j$) counts the number of directed paths from $\bv_i$ to $\bv_j$. 

Before providing an example of the notions introduced, we point out that the matrix $\bB$ also captures the number of \emph{distinct} transvection decompositions of a given symplectic matrix $\bF$. However, this treatment goes beyond the scope of this paper and will be presented in future work.

\begin{exa}
Let us consider an example with $m=5$ and $\bF = \bT_{\bv_1}\bT_{\bv_2}\bT_{\bv_3}\bT_{\bv_4}\bT_{\bv_5}$, where 
\[
\bV = \left[\!\!\begin{array}{c}
     \bv_1  \\
     \bv_2  \\
     \bv_3  \\
     \bv_4  \\
     \bv_5  \\
\end{array}
\!\!\right] = \left[\!\!\begin{array}{cccccccccc}
     1&0&0&1&1&1&0&0&1&0  \\
     1&1&0&0&0&1&1&1&1&0  \\
     0&1&1&0&1&1&0&0&1&0  \\
     0&0&0&1&0&1&1&1&1&0  \\
     0&1&0&0&0&1&0&0&0&1
\end{array}
\!\!\right].
\]
Then one computes
\[
\bA = 
\left[\!\!\begin{array}{ccccc}
     0&1&0&1&0  \\
     1&0&1&1&0  \\
     0&1&0&1&1  \\
     1&1&1&0&1  \\
     0&0&1&1&0  \\
\end{array}
\!\!\right]
\text{ and }
\bB = 
\left[\!\!\begin{array}{ccccc}
     1&1&1&3&4  \\
     0&1&1&2&3  \\
     0&0&1&1&2  \\
     0&0&0&1&1  \\
     0&0&0&0&1  \\
\end{array}
\!\!\right].
\]
The graphical description of this scenario is given in Figure~\ref{F-dg}. For instance, entry $b_{1,4} = 3$ and there are precisely three directed paths from $\bv_1$ to $\bv_4$, namely, $(\bv_1,\bv_4)$, $(\bv_1,\bv_2,\bv_4)$, and $(\bv_1,\bv_2,\bv_3,\bv_4)$. 
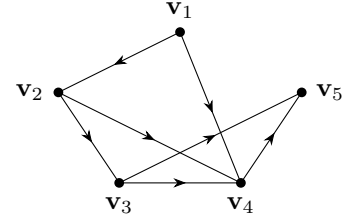
\begin{figure}[!h]
\centering
\begin{tikzpicture}[scale=.8,colorstyle/.style={circle, draw=black!100,fill=black!100, thick, inner sep=0pt, minimum size=1mm},every node/.style={sloped,allow upside down}]
    \node at (0, 1)[colorstyle,label=above:$\bv_1$]{};
    \node at (-2, 0)[colorstyle,label=left:$\bv_2$]{};
    \node at (-1,-1.5)[colorstyle,label=below:$\bv_3$]{};
    \node at (1,-1.5)[colorstyle,label=below:$\bv_4$]{};
    \node at (2, 0)[colorstyle,label=right:$\bv_5$]{};
    \draw(0,1) -- node {\midarrow} (-2,0);
    \draw(0,1) -- node {\midarrow}(1,-1.5);
    \draw(-2,0) -- node {\midarrow}(-1,-1.5);
    \draw(-2,0) -- node {\midarrow}(1,-1.5);
    \draw(-1,-1.5) -- node {\midarrow}(1,-1.5);
    \draw(-1,-1.5) -- node {\midarrow}(2,0);
    \draw(1,-1.5) -- node {\midarrow}(2,0);
\end{tikzpicture}
\caption{The directed graph with adjacency matrix $\bA_{\rm u}$.}
\label{F-dg}
\end{figure}
\end{exa}


\begin{theo}\label{T-T}
For any symplectic matrix $\bF = \bT_{\bv_1}\cdots\bT_{\bv_r}$ 
we have
\begin{equation}
    \bF = \bI + \mathbf{\Omega} \bV\T \bB\bV.
\end{equation}
As a consequence, if $\bF$ is an involution then $\bF = \bI + \mathbf{\Omega} \bV\T \bV$
\end{theo}
\begin{proof}
By the definition of transvections, the action of $\bF$ on $\bx$ is given by some linear combination of $\bv_j$ added to $\bx$, that is 
\begin{equation}
    \bx\bF = \bx + \sum_{j = 1}^r w_j\bv_j,
\end{equation}
where $w_j$ depends on $\inners{\bx}{\bv_i}$ for $i<j$. We claim that $w_j = \bx\mathbf{\Omega}\bV\T \bB_j$ where $\bB_j$ is the $j$th column of $\bB$. This in turn will complete the proof. In order to prove the claim, note that the input of $\bT_{\bv_j}$ is $\bx\bT_{\bv_1}\cdots \bT_{\bv_{j-1}}$. Thus $\inners{\bx}{\bv_i}$ contributes to $w_j$ only if $i<j$ \emph{and} there is a directed path form $\bv_i$ to $\bv_j$, which could be of length $1\leq \ell \leq j-i$. This information is precisely encoded by $\bB_j$.
\end{proof}

To the best of our knowledge, Theorem~\ref{T-T} constitutes a novel structural result about symplectic matrices, and comparing it with~\eqref{e-transvectionDef}, should come as no surprise. This structure is the main building block of what follows. 
Based on Theorem~\ref{T-T}, it is imperative to consider the \emph{residue matrix} 
\begin{equation}\label{e-Fhat}
    \widehat{\bF} := \mathbf{\Omega}(\bI + \bF) = \bV\T \bB \bV = \sum_{i,j} b_{i,j}\bv_i\!\!\T\,\bv_j.
\end{equation}
The terminology comes from the obvious fact that $\rs (\widehat{\bF}) = \Res(\bF)$. Note that $\widehat{\bF}$ is symmetric iff $\bB = \bI$ (recall that $\bB$ is lower triangular) iff $\bF$ is an involution. 
Moreover, since $\widehat{\bF\T}$ has all-zero diagonal iff $\widehat{\bF}$ does, and since
\begin{equation}
    \bx\widehat{\bF\T}\bx\T = \bx\mathbf{\Omega}(\bI + \bF\T)\bx\T = \bx\mathbf{\Omega}\bx\T + \bx\mathbf{\Omega}\bF\T\bx\T = \inners{\bx}{\bx\bF},
\end{equation}
we conclude that $\widehat{\bF}$ has all-zero diagonal iff $\bF$ is hyperbolic. In such case $\bF$ is also an involution, and thus $\widehat{\bF}$ is \emph{alternating} (that is, symmetric and all-zero diagonal). It follows by Lemma~\ref{l-hyp} that we may restrict ourselves to non-hyperbolic maps, and thus we will assume that $\widehat{\bF}$ \emph{is not} alternating. 
\subsection{Decomposition of Symplectic Involutions}
In this subsection we will present a simple algorithm for the decomposition of (non-hyperbolic)  symplectic involutions, and provide intuition for the much more delicate decomposition of general symplectic matrices. 

\begin{theo}[Transvection Decomposition of Involutions]\label{T-main}
Let $\bF$ be a non-hyperbolic involution, so that the residue matrix $\widehat{\bF}$ is non-alternating.
Then there exists $\bP\in \GL(2m;2)$ such that $\bF = \bT_{\bv_1}\cdots\bT_{\bv_r}$, where $r = \dim\Res(\bF)$ and $\bv_j$ is the $j$th row of $\bP\widehat{\bF}$ for $1\leq j \leq r$.
\end{theo}
\begin{proof}
Let $\bR$ be the matrix of row operations that transforms $\widehat{\bF}$ into Row-Reduced Echelon form. Let $\bE$ be the $r\times r$ upper left block of $\bR\widehat{\bF}\bR\T$, which is invertible by construction. 
It will also be symmetric and have non-zero diagonal since $\bF$ is non-hyperbolic involution. 
Then there exists $\bQ \in \GL(r;2)$ such that $\bQ\bE\bQ\T = \bI_r$; see~\cite[2.1.14]{OMeara} for instance. Now put $\bP = \texttt{blkdiag}(\bQ,\bI_{2m-r}) \bR$. Then 
\begin{equation}\label{e-T}
    \bP\widehat{\bF}\bP\T = \fourmat{\bI_r}{0}{0}{0}.
\end{equation}
We will consider the nonzero rows of $\bP\widehat{\bF}$, that is, $\twomat{\bQ\bE}{\bf 0}\bR^{- \sf T}$. For $1\leq j \leq r$ let $\bw_j$ denote the $j$th row of $\bP\widehat{\bF}$, that is, $\bw_j = \be_j\bP^{-\sf T}$, where $\be_j \in \Fm$ is the $j$th standard basis vector. 
Put $\bF' = \bT_{\bw_1}\cdots\bT_{\bw_r}$. Since $\bw_j$'s are linear combinations of $\bv_j$'s and since $\bF$ is an involution it follows that $\bA(\bw_1,\ldots,\bw_r) = \bA(\bv_1,\ldots,\bv_r) = \mathbf{0}_r$ and $\bB(\bw_1,\ldots,\bw_r) = \bB(\bv_1,\ldots,\bv_r) = \bI_r$. Then~\eqref{e-Fhat} yields
\begin{equation}
    \widehat{\bF'} = \sum_{j = 1}^r \bw_j\!\!\T\bw_j = \sum_{j = 1}^r \bP^{-1}\be_j\!\!\T\be_j\bP^{-\sf T} = \bP^{-1} \fourmat{\bI_r}{0}{0}{0} \bP^{-\sf T} = \widehat{\bF},
\end{equation}
and thus $\bF = \bF'$.
\end{proof}

The strength of Theorem~\ref{T-main} is that, as we will see, it can be generalized to non-involutions. The case of involutions can be dealt separately with an alternate approach, which, however, does not generalize to non-involutions. According to~\cite[Thm.~4.1]{Beigi-qic10}, an involution $\bF$ is conjugate with an involution of form
\begin{equation}\label{e-FUS}
    \bF_U(\bS) \equiv \fourmat{\bI}{\bS}{\mathbf{0}}{\bI}, \quad \bS\in \Sym(m;2),
\end{equation}
that is, there exists $\bM \in \Sp(2m;2)$ such that $\bM\bF\bM^{-1} = \bF_U(\bS)$. On the other hand, the involutions of form~\eqref{e-FUS} are easy to decompose as described in~\cite[Prop.~9(2)]{PRTC20}. So let us assume $\bF_U(\bS) = \bT_{\bv_1}\cdots \bT_{\bv_r}$. It is straightforward to verify that $\bT_\bv\bM = \bM\bT_{\bv\bM}$ holds for any symplectic $\bM$. This yields
\begin{align}
    \bF & = \bM^{-1}\bF_U(\bS)\bM \\
    & = (\bM^{-1}\bT_{\bv_1}\bM)\cdot (\bM^{-1}\bT_{\bv_2}\bM)\cdots  (\bM^{-1}\bT_{\bv_r}\bM) \\
    & =\bT_{\bv_1\bM}\cdots \bT_{\bv_r\bM}.
\end{align}
\subsection{Decomposition of Symplectic Matrices}
Finding a transvection decomposition for involutions is facilitated by the simple nature of their associated $\bA$ and $\bB$ matrices. As we will see, the general case is much more complicated. Let $\bF$ be any (non-hyperbolic) symplectic matrix and consider its residue matrix $\widehat{\bF}$, for which $\rk(\widehat{\bF}) = r$. 
Thus, a transvection decomposition of $\bF$ is given by some basis of $\Res(\bF) = \rs(\widehat{\bF})$. 
The task in hand is how to find such basis. The main idea is to start with some fixed basis 
and transform it accordingly until we reach the desired result. 
We will start with a basis of $\Res(\bF)$ in Row-Reduced Echelon form, that is, let $\bR$ be a matrix of row operations so that $\bR\widehat{\bF} = \twomatv{\bV}{\bf 0}$, where $\bV$ is a $r\times 2m$ basis. This can be done, for instance, via Gauss Elimination over $\F_2$. Then 
\begin{equation}\label{e-rref}
\bR\widehat{\bF}\bR\T = \fourmat{\bE}{\bf 0}{\bf 0}{\bf 0}, \, \bE\in \GL(r;2).
\end{equation}
As mentioned, the basis $\bV$ may or may not constitute a transvection decomposition of $\bF$, and the idea is to consider other bases of form $\bQ\bV$ where $\bQ \in \GL(r;2)$. Let us denote $\bP = \texttt{blkdiag}(\bQ,\bI_{2m-r})$, and let $\bB = \bB(\bQ\bV)$.
\begin{lem}\label{l-TET}
With the same notation as above, the basis $\bQ\bV$ constitutes a transvection decomposition of $\bF$ iff $\bQ\bE\bQ\T = \bB^{- \sf T}$.
\end{lem}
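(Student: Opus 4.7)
The plan is to reduce the criterion to a single matrix identity by expressing the residue matrix $\widehat{\bF}$ in two different ways and comparing them.

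First, I would extract from~\eqref{e-rref} a closed-form expression for $\widehat{\bF}$ in terms of the RREF basis $\bV$ and the Gram-type block $\bE$. Writing $\bR$ in block rows as $\bR=\twomatv{\bR_1}{\bR_2}$ with $\bR_1$ the top $r$ rows, the hypothesis $\bR\widehat{\bF}=\twomatv{\bV}{\mathbf{0}}$ combined with~\eqref{e-rref} gives $\bV\bR_1\T=\bR_1\widehat{\bF}\bR_1\T=\bE$, so in particular $\bV^+:=\bR_1\T\bE^{-1}$ is a right inverse of $\bV$. Inverting~\eqref{e-rref} as $\widehat{\bF}=\bR^{-1}\fourmat{\bE}{\mathbf{0}}{\mathbf{0}}{\mathbf{0}}\bR^{-\sf T}$ and expanding $\bR^{-1}$ in block columns, a direct computation then yields the compact identity $\widehat{\bF}=\bV\T\bE^{-\sf T}\bV$.

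Next, let $\bw_1,\ldots,\bw_r$ denote the rows of $\bQ\bV$ and set $\bF':=\bT_{\bw_1}\cdots\bT_{\bw_r}$. Regardless of whether $\bF'$ actually equals $\bF$, Theorem~\ref{T-T} applied to this product gives
\[
\widehat{\bF'}=(\bQ\bV)\T\,\bB\,(\bQ\bV)=\bV\T(\bQ\T\bB\bQ)\bV,
\]
with $\bB=\bB(\bQ\bV)$.

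By definition, $\bQ\bV$ is a transvection decomposition of $\bF$ iff $\bF=\bF'$ iff $\widehat{\bF}=\widehat{\bF'}$. Equating the two formulas above yields
\[
\bV\T\bigl(\bQ\T\bB\bQ-\bE^{-\sf T}\bigr)\bV=\mathbf{0},
\]
and multiplying on the left by $(\bV^+)\T$ and on the right by $\bV^+$ cancels the outer $\bV$'s to leave $\bQ\T\bB\bQ=\bE^{-\sf T}$. Inverting and transposing both sides rearranges this to the claimed $\bQ\bE\bQ\T=\bB^{-\sf T}$.

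The main obstacle is the first step: packaging the RREF data from~\eqref{e-rref} into the compact identity $\widehat{\bF}=\bV\T\bE^{-\sf T}\bV$ that makes the subsequent comparison clean. Everything afterwards is a direct invocation of Theorem~\ref{T-T} combined with elementary cancellation over $\F_2$; as a sanity check, in the involution case $\bB=\bI_r$ and $\bE$ becomes symmetric, so the criterion reduces to $\bQ\bE\bQ\T=\bI_r$, recovering the statement used in the proof of Theorem~\ref{T-main}.
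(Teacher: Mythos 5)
Your proof is correct and takes essentially the same route as the paper: both substitute $\bQ\bV$ into Theorem~\ref{T-T} (equivalently~\eqref{e-Fhat}) to get $\widehat{\bF'}=\bV\T\bQ\T\bB\bQ\bV$, compare against the RREF block form of $\widehat{\bF}$, and cancel to obtain the congruence relation. The only organizational difference is that you cancel the outer $\bV$'s via the explicit right inverse $\bV^+=\bR_1\T\bE^{-1}$, whereas the paper conjugates by $\bR$ to isolate the top-left $r\times r$ block and reads off $\bE=\bE\T\bQ\T\bB\bQ\bE$; these are the same cancellation executed in two equivalent ways.
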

\begin{proof}
Assume $\bQ\bV$ gives a transvection decomposition for $\bF$. Then, by~\eqref{e-Fhat} we have $\widehat{\bF} = (\bQ\bV)\T \cdot \bB \cdot  (\bQ\bV)$. But with the notation above we have $\bQ\bV = \twomat{\bQ\bE}{\bf 0}\bR^{- \sf T}$. Thus
\begin{align}
    \widehat{\bF}  & = \bV\T\bQ\T \cdot \bB \cdot \bQ\bV \\
    & = \bR^{-1}\twomatv{\bE\T\bQ\T}{\bf 0} \, \cdot \bB \cdot \, \twomat{\bQ\bE}{\bf 0} \bR^{- \sf T}\\
    & = \bR^{-1}\fourmat{\bE\T\bQ\T \cdot \bB \cdot \bQ\bE}{\bf 0}{\bf 0}{\bf 0} \bR^{- \sf T}.
\end{align}
It follows by~\eqref{e-rref} that $\bE = \bE\T\bQ\T \cdot \bB \cdot \bQ\bE$ and thus $\bQ\bE\bQ\T = \bB^{-\sf T}$. The reverse direction follows similarly.
\end{proof}
\begin{lem}\label{l-QEQ}
With the same notation as above, if $\bQ\bE\bQ\T$ is
lower triangular, then $\bQ\bE\bQ\T = \bB^{-\sf T}$.
\end{lem}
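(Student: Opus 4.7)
The plan is to reduce the lemma to an identity between two triangular matrices by first expressing $\bA(\bQ\bV)$ in terms of $\bT := \bQ\bE\bQ\T$. The key ingredient is the identity
\[
    \widehat{\bF}\mathbf{\Omega}\widehat{\bF}\T = \widehat{\bF} + \widehat{\bF}\T,
\]
which follows in a few lines from $\widehat{\bF} = \mathbf{\Omega}(\bI + \bF)$, the symplectic relation $\bF\mathbf{\Omega}\bF\T = \mathbf{\Omega}$, and $\mathbf{\Omega}^2 = \bI$: both sides reduce to $\mathbf{\Omega}\bF + \bF\T\mathbf{\Omega}$.

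Sandwiching this identity between $\bR$ and $\bR\T$, using $\bR\widehat{\bF} = \twomatv{\bV}{\bf 0}$ together with~\eqref{e-rref}, the left-hand side becomes $\fourmat{\bV\mathbf{\Omega}\bV\T}{\bf 0}{\bf 0}{\bf 0}$ while the right-hand side becomes $\fourmat{\bE+\bE\T}{\bf 0}{\bf 0}{\bf 0}$. Comparing the nonzero blocks yields $\bA(\bV) = \bV\mathbf{\Omega}\bV\T = \bE + \bE\T$, and consequently
\[
    \bA(\bQ\bV) = \bQ\bA(\bV)\bQ\T = \bT + \bT\T.
\]

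Now I invoke the hypothesis that $\bT$ is lower triangular. Since $\bQ$ and $\bE$ are invertible, so is $\bT$, forcing its diagonal to be all ones; write $\bT = \bI + \bL$ with $\bL$ strictly lower triangular. Then $\bA(\bQ\bV) = \bL + \bL\T$, whose upper triangular part is $\bL\T$, so $\bA_{\rm u}(\bQ\bV) = \bL\T = \bT\T + \bI$. Plugging this into~\eqref{e-B1} gives
\[
    \bB = (\bI + \bA_{\rm u})^{-1} = (\bT\T)^{-1} = \bT^{-\sf T},
\]
i.e., $\bQ\bE\bQ\T = \bT = \bB^{-\sf T}$, as required.

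The main obstacle I expect is discovering the identity $\widehat{\bF}\mathbf{\Omega}\widehat{\bF}\T = \widehat{\bF} + \widehat{\bF}\T$: this is the nontrivial bridge between the linear-algebraic block $\bE$ arising from the row reduction~\eqref{e-rref} and the symplectic Gram-type datum $\bA(\bV)$. Once that bridge is in place, the lower-triangular hypothesis on $\bT$ and the definition~\eqref{e-B} of $\bB$ mesh essentially tautologically.
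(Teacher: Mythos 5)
Your proof is correct and takes a genuinely different route from the paper's. The paper argues from the relation $\bQ\bE'\bQ\T = (\bQ\bE\bQ\T)\T\,\bB\,(\bQ\bE\bQ\T)$ with $\bE' = \bE\T\bQ\T\bB\bQ\bE$, then reasons about which factors must be simultaneously upper and lower triangular to force $(\bQ\bE\bQ\T)\T\bB = \bI_r$. You instead establish the intrinsic identity
\[
\widehat{\bF}\,\mathbf{\Omega}\,\widehat{\bF}\T = \widehat{\bF} + \widehat{\bF}\T
\quad\text{(both sides equal }\mathbf{\Omega}\bF + \bF\T\mathbf{\Omega}\text{)},
\]
which holds for every symplectic $\bF$, and conjugate by $\bR$ to read off the clean relation $\bA(\bV) = \bV\mathbf{\Omega}\bV\T = \bE + \bE\T$ between the Gram-type matrix and the row-reduced block $\bE$ of~\eqref{e-rref}. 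This makes the link between $\bE$ and $\bA(\bQ\bV) = \bT+\bT\T$ explicit for \emph{any} $\bQ$, independently of whether $\bQ\bV$ already gives a transvection decomposition; the lower-triangular hypothesis on $\bT$ then determines $\bA_{\rm u} = \bT\T + \bI$ and hence $\bB = (\bI + \bA_{\rm u})^{-1} = \bT^{-\sf T}$ by a direct computation over $\F_2$. Your version is more transparent: it avoids the paper's somewhat opaque step that $\bQ\bE'\bQ\T$ and $(\bQ\bE\bQ\T)\T\bB$ "have to be" lower triangular, and it isolates a reusable structural fact ($\bA(\bV) = \bE + \bE\T$, equivalently $\widehat{\bF}\mathbf{\Omega}\widehat{\bF}\T = \widehat{\bF}+\widehat{\bF}\T$) that the paper never states. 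The trade-off is that your argument requires spotting that identity, whereas the paper's proof stays within the notation already set up in Lemma~\ref{l-TET}.
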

\begin{proof}
Assume $\bQ\bE\bQ\T$ is 
lower triangular and put $\bE' = \bE\T\bQ\T \cdot \bB \cdot \bQ\bE$. Then
\begin{align}
    \bQ\bE\bQ\T = ((\bQ\bE\bQ\T)\T \cdot \bB)^{-1} \cdot \bQ\bE'\bQ\T.
\end{align}
If $\bQ\bE\bQ\T = \bI_r$, the statement is clear because in this case $\widehat{\bF}$ is symmetric, and therefore $\bB = \bI_r$.
If $\bQ\bE\bQ\T \neq \bI_r$, then both $\bQ\bE'\bQ\T$ and $(\bQ\bE\bQ\T)\T \cdot \bB$ have to be invertible and lower triangular. But $(\bQ\bE\bQ\T)\T$ and $\bB$ are both invertible and upper triangular, meaning $(\bQ\bE\bQ\T)\T \cdot \bB$ is also upper triangular.
Thus $(\bQ\bE\bQ\T)\T \cdot \bB = \bI_r$, and $\bQ\bE\bQ\T = \bB^{-\sf T}$ 
\end{proof}
It follows by Lemmas~\ref{l-TET} and~\ref{l-QEQ} that we are seeking for matrices $\bQ$ that triangularize $\bE$ from~\eqref{e-rref} by congruence. 
For more on triangularizations by congruence and related algorithms we refer the reader to~\cite{Botha97}. 
It also follows by Lemma~\ref{l-TET} that $\widehat{\bF}$ \emph{can} be triangularized by congruence for any non-hyperbolic $\bF$ (since for this, one would only need a transvection decomposition of $\bF$, which we know it always exists). 
We resume everything to the following theorem.
\begin{theo}[Transvection Decomposition of Symplectic Matrices]\label{T-main1}
Let $\bF$ be a generic symplectic matrix. Then there exists an algorithm that for any generic symplectic matrix $\bF$ outputs a minimal transvection decomposition.
\end{theo}
\begin{proof}
If the residue matrix $\widehat{\bF}$ is alternating, that is, if $\bF$ is hyperbolic, then pick $\bv$ as in Lemma~\ref{l-hyp} and update the input $\bF$ with the non-hyperbolic $\bF\bT_\bv$, while keeping the residue space intact. Next, perform Gauss Elimination on $\widehat{\bF}$ with $\bR$ as in~\eqref{e-rref}, and let $\bQ$ be such that $\bQ\bE\bQ\T$ is lower triangular. Then, by Lemmas~\ref{l-TET} and ~\ref{l-QEQ}, the $r$ nonzero rows of $\texttt{blkdiag}(\bQ,\bI_{2m-r})\bR\widehat{\bF}$, where $r = \dim \Res(\bF)$, along with $\bv$, yield a minimal transvection decomposition for $\bF$.
\end{proof}


\section{Decomposition of Clifford Gates}
In~\cite{PRTC20}, the authors studied the Clifford hierarchy via the support of the underlying gates. Every gate $\bU \in \U(N)$ can be written as
\begin{equation}\label{e-supp}
    \bU = \frac{1}{N}\sum_{\bv\in \Fm}\Tr\big(\bE(\bv)\bU\big)\,\bE(\bv),
\end{equation}
and the support of $\bU$ consist of the basis terms that appear in~\eqref{e-supp}, that is,
\begin{equation}\label{e-supp1}
    \supp(\bU) := \{\bE(\bv)\in \cH\cW_N \mid \Tr\big(\bE(\bv)\bU\big) \neq 0\}.
\end{equation}
Given the isomorphism $\bE(\bv) \longleftrightarrow \bv$, the support can be equivalently though of as a subspace of $\Fm$. On the other hand,~\eqref{e-Phi} assigns $\bF\in \Sp(2m;2)$ to a coset $\cH\cW_N \bG = \Phi^{-1}(\bF)$ for any $\bG \in\cl_N$. 
It is straightforward to verify that the Clifford 
\begin{align}\label{e-transvection}
\bG_\bv:= \frac{\bI_N \pm i\bE(\bv)}{\sqrt{2}} \in \cl_N
\end{align}
corresponds to the transvection $\bT_\bv$.
Then, since every symplectic is a product of transvections, it follows that
\begin{equation}\label{e-GT}
    \bG = \bE_0\prod_{n = 1}^k\frac{\bI_N + i\bE_n}{\sqrt{2}} = \frac{\bE_0}{\sqrt{|S|}}\sum_{\bE\in S}\alpha_\bE\bE,
\end{equation}
where $\bE_0\in \cH\cW_N, S = \l\bE_1,\ldots,\bE_k\r$, and $\alpha_\bE \in \C$; see ~\cite[Prop.~4]{PRTC20}. 
From earlier discussion, it follows that the support of any $\bG\in \Phi^{-1}(\bF)$ is given by $\Res(\bF)$ if $\bF$ is non-hyperbolic, and by some subspace of $\Res(\bF)$ of index $2$ otherwise. In~\cite[Prop.~9]{PRTC20}, the authors determined the support of the standard Clifford gates~\eqref{e-GDP}-\eqref{e-GO}, while the general case remained open.
The difficulty arose by the fact that the support of products is hard to compute. This problem can now be solved with the aid of Theorem~\ref{T-main1}, as resumed in Algorithm~\ref{alg1}. 
\begin{algorithm} \caption{Transvection Decomposition of Clifford Gates} \label{alg1}
{\bf Input:} A Clifford gate $\bG$. 
\begin{algorithmic}
\STATE~1.   Compute $\bF$ from~\eqref{e-GEG}.
\STATE~2.  Compute $\bv,\bv_1,\cdots,\bv_r$ from Theorem~\ref{T-main1}.
\STATE~3.  $\bG_0 = \bG_\bv\prod_j\bG_{\bv_j} $.
\STATE~4.  Find $\bE_0 = \bE(\bv_0)$ such that $\bG = \bE_0\bG_0$.
\end{algorithmic}
{\bf Output:} $\bv_0,\,\bv_j$'s.
\end{algorithm}
It is also worth mentioning that in this process one may lose an eighth root of unity; see Example~\ref{ex-CNOT} for instance.
We point out here that $\bG$ is traceless iff $\bE_0\notin S$. Thus the search in Step 4. of Algorithm~\ref{alg1} can be reduced to either outside $S$ if $\bG$ is traceless or in $S$ otherwise.
\begin{exa}
The Hadamard gate can be written as
\begin{equation}
    \bH_2 = \frac{1}{\sqrt{2}}(\bX + \bZ) = \bX \frac{\bI + i\bY}{\sqrt{2}},
\end{equation}
where $\bY = i\bX\bZ$ as usual. Consider now the $m$ fold transversal Hadamard gate $\bH_N = (\bH_2)^{\otimes m}$, for which $\Phi(\bH_N) = \mathbf{\Omega}$. Additionally $\widehat{\mathbf{\Omega}} = \fourmat{\bI}{\bI}{\bI}{\bI}$ and $\dim \Res(\mathbf{\Omega}) = m$. Then $\bP = \fourmat{\bI}{\mathbf 0}{\bI}{\bI}$ triangularizes $\widehat{\mathbf{\Omega}}$:
\begin{equation}
    \bP\widehat{\mathbf{\Omega}}\bP\T = \fourmat{\bI}{\mathbf 0}{\mathbf 0}{\mathbf 0}.
\end{equation}
The first $m$ nonzero rows of $\bP\widehat{\mathbf{\Omega}}$ are $\twomat{\bI}{\bI}$. We see that the $n$th row yields the gate $\bY_n$ with $\bY$ in qubit $n$ and identity elsewhere. From Step 3. of Algorithm~\ref{alg1} we compute
\begin{equation}
    \bG_0 = \prod_{n = 1}^m\frac{\bI_N + i\bY_n}{\sqrt{2}}. 
\end{equation}
We then find $\bH_N = \bX^{\otimes m}\bG_0$. A similar result holds for partial Hadamard gates $\bH^{\otimes r}\otimes \bI_{2^{m-r}}$, to which correspond symplectics of form~\eqref{e-Or}; see also~\cite[Prop.~9(3)]{PRTC20}
\end{exa}
\begin{exa}\label{ex-CNOT}
The symplectic and residue matrices corresponding to the CNOT gate are given by
\begin{equation}
    \bF = \left[\!\!\begin{array}{cccc}
         1&1&0&0 \\
         0&1&0&0 \\
         0&0&1&0 \\
         0&0&1&1
    \end{array}
    \!\!\right]
    \text{ and }
        \widehat{\bF} = \left[\!\!\begin{array}{cccc}
         0&0&0&0 \\
         0&0&1&0 \\
         0&1&0&0 \\
         0&0&0&0
    \end{array}
    \!\!\right],
\end{equation}
from which we see that $\widehat{\bF}$ is alternating, and thus $\bF$ is hyperbolic. So first, we transform $\bF$ to a non-hyperbolic map by using the first non-zero row of $\widehat{\bF}$, that is, $\bv = 0010$. Then we update $\bF \leftarrow \bF\bT_\bv$, for which 
\begin{equation}
    \bF = \left[\!\!\begin{array}{cccc}
         1&1&1&0 \\
         0&1&0&0 \\
         0&0&1&0 \\
         0&0&1&1
    \end{array}
    \!\!\right]
    \text{ and }
        \widehat{\bF} = \left[\!\!\begin{array}{cccc}
         0&0&0&0 \\
         0&0&1&0 \\
         0&1&1&0 \\
         0&0&0&0
    \end{array}
    \!\!\right].
\end{equation}
A matrix that triangularizes $\widehat{\bF}$ is given by
\begin{equation}
        \bP = \left[\!\!\begin{array}{cccc}
         0&1&1&0 \\
         0&0&1&0 \\
         1&0&1&0 \\
         0&0&0&1
    \end{array}
    \!\!\right].
\end{equation}
The non-zero rows of $\bP\widehat{\bF}$ are $\bv_1 = 0100,\bv_2 = 0110$. Note that $\bv = \bv_1 + \bv_2$, and $\inners{\bv_1}{\bv_2} = 0$ as in Proposition~\ref{P-ComT}. Then we compute
\begin{align}
    \bG_0 
    & = \frac{(\bI + i\bI\otimes \bX)(\bI - i\bZ\otimes \bX)(\bI +i\bZ\otimes \bI)}{\sqrt{8}}.
\end{align}
And then we end with the observation that CNOT$ = \xi\bG_0$, where $\xi = (1 - i)/\sqrt{2}$ is an eighth root of unity.
\end{exa}

\section*{Acknowledgements}
This work was funded in part by the Academy of Finland (grant 334539).

\bibliographystyle{IEEEtran}
\bibliography{IEEEabrv,ITW2020}

\end{document}